\newtheorem{dfn}{Definition} 
\newtheorem{thm}{Theorem}
\newtheorem{prop}{Proposition}
\newtheorem{ex}{Example}
\begin{document}

\preprint{AIP/123-QED}

\title[Sample title]{Duality Between Quantization and Measurement}

\author{Yosuke Morimoto} 
 \email{myosuke@post.kek.jp / hrmorimotohm@gmail.com}
\affiliation{Department of Physics, Graduate School of Science, the University of Tokyo, 7-3-1, Hongo, Bunkyo-ku, Tokyo 113-0033, Japan.}

\date{\today}

\begin{abstract}
Using the known duality between events and states, we establish the fact that there is a duality between quantization of events and measurement of states.
With this duality, we will show a generalized system of imprimitivity and a covariant measurement are in a dual relation.
We also propose a novel quantization scheme as a dual of a measurement model.
\end{abstract}

\maketitle

\section{Introduction}
The structure of events in quantum theory has been studied for a long time.
In \cite{DFMB}, the authors proposed an effect algebras to formulate the structure of events in a context of quantum logic.
An effect algebra is a set of events with operations such as the union of events and the negation. 
An effect algebra was generalized to an effect module in \cite{SGSP} admitting a probabilistic discard of events.

In \cite{BJ}, the author showed that a category of effect modules $\mathbf{EMod}$  is an dual adjunction to a category of convex spaces $\mathbf{Conv}$.
Convex space is a natural structure of states, so they established an adjunction between events and states.
In \cite{BJJM1} this adjunction is generalized to a categorical duality by imposing some topologies.
They established the important fact that events and states are in a dual relation.
In other words, events are completely characterized by states and vice versa.

There are several example of (Banach) effect modules.
We especially focus on [0,1]-valued continuous functions on a compact Hausdorff space $C(X,[0,1])$ and effect operators on a Hilbert space $Ef(\mathcal{H})$ since the former represents the structure of classical events and the latter represents the structure of quantum events. 
As the dual of these, we also focus on probability Radon measures on a compact Hausdorff space $PR(X)$ and density operators $DM(\mathcal{H})$ on a Hilbert space where the former represents the structure of classical states and the latter represents quantum states.

Incidentally, we can consider a map $Q:C(X,[0,1])\to Ef(\mathcal{H})$ which indicates quantization of events.
Also, we can consider a map $M:DM(\mathcal{H})\to PR(X)$ which indicates measurement of states.
Since $C(X,[0,1])$ is dual to $PR(X)$ and $Ef(\mathcal{H})$ is dual to $DM(\mathcal{H})$,  quantization of events $Q$ and measurement of states $M$ should be in a dual relation.
We show this fact in this paper.
(We make an additional remark that an adjoint relation between quantization and quasi-classicalization is discussed in \cite{JLIT}.)

We can apply the duality between quantization of events and measurement of states to an actual quantization and an actual measurement.
For an actual quantization, we will show that a dual of a generalized system of imprimitivity is a covariant measurement.
For an actual measurement, we will propose a novel quantization scheme as a dual of a measurement model.

\section{duality between Event and State}
In this section, we will review a duality between a category of Banach effect module $\mathbf{BEM}$ and  a category of compact convex Hausdorff space with a separating property $\mathbf{CCH}$\cite{BJJM1,BJJM2,BJ}.
This is equivalent to the physical statement that events and states are in a dual relation.

We introduce a set of continuous functions from a compact Hausdorff space to the interval $[0,1]$ with a supremum norm topology and a set of effect operators on a Hilbert space with an operator norm as examples of $\mathbf{BEM}$. 
We also introduce a set of a probability Radon measure on a compact Hausdorff space with a locally convex topology and a set of density operators on a Hilbert space with a locally convex topology as examples of $\mathbf{CCH}$.
We will give a brief description of these since these play an important role in this paper. 

\subsection{Dual adjunction between $\mathbf{EMod }$ and $\mathbf{Conv}$}
In this section, we see the fact that there is a dual adjunction between $\mathbf{EMod }$ and $\mathbf{Conv}$ by Hom functors \cite{BJ}.

To this end, we state the definition of effect modules and their homomorphisms and introduce a category of effect modules $\mathbf{EMod}$.
\begin{dfn}
An effect module consists of a set E with a partial operation $\ovee :E\times E\to E$ which is both commutative and associative, a neutral element 0, an orthosupplement $\neg : E\to E$ which has the property that $\neg x$ is the unique element with $x\ovee\neg x=1$ where $1=\neg0$, and $[0,1]$-action $\cdot : [0,1] \times E\to E$ satisfying following conditions:
\begin{gather*}
x\ovee 0=0\ovee x=x\\
x\bot 1 \mspace{5mu} \text{only if} \mspace{5mu}x=0\\
1\cdot x=x\\
r+s\le 1 \Rightarrow (r+s)\cdot x=r\cdot x+s\cdot x\\
(rs)\cdot x=r\cdot (s\cdot x)\\
x\bot y \Rightarrow r\cdot(x\ovee y)=(r\cdot x)\ovee (r\cdot y)
\end{gather*}
where $r,s\in [0,1]$ and $x,y\in E$ and one writes $x\bot y$ if the operation $x\ovee y$ is defined.

A homomorphism of effect modules $(E_1,0_{1},\neg_{1},\ovee_{1},\cdot_{1})\to(E_2,0_{2},\neg_{2},\ovee_{2},\cdot_{2})$ is a function $f : E_1\to E_2$ between the underlying sets satisfying
\begin{gather*}
f(r\cdot_{1}x)=r\cdot_{2}f(x)\\
f(1_{1})=1_{2}\\
x\bot y \Rightarrow f(x\ovee_{1} y)=f(x)\ovee_{2}f(y)
\end{gather*}
\end{dfn}
This yields a category of effect modules $\mathbf{EMod}$ whose objects consist of effect modules and arrows consist of their homomorphisms.
Note that an effect module describes the structure of events.

Next, we state the definition of convex spaces and their homomorphisms and introduce a category of  convex spaces $\mathbf{Conv}$.
\begin{dfn}
A convex space consists of a set S with a ternary operation $c_{-}(-,-) : [0,1]\times S\times S\to S$ satisfying following conditions:
\begin{gather*}
c_r(x,x)=x\\
c_r(x,y)=c_{1-r}(y,x)\\
c_0(x,y)=y\\
c_r(x,c_s(y,z))=c_{r+(1-r)s}(c_{r/[r+(1-r)s]}(x,y),z)
\end{gather*}
where $r,s\in [0,1]$, $x,y,z\in S$ and $r/[r+(1-r)s]\neq0$.

A homomorphism of convex sets $(S_1,c_{-}(-,-)_1)\to(S_2,c_{-}(-,-)_2)$ is a function $f : S_1\to S_2$ between the underlying sets satisfying
\begin{equation*}
f(c_r(x,x')_1)=c_r(f(x),f(x'))_2
\end{equation*}
for all $r\in [0,1]$ and $x,x'\in S_1$.
\end{dfn}
This yields a category of convex spaces $\mathbf{Conv}$ whose objects consist of convex sets and arrows consist of their homomorphisms.
Note that a convex space dscribes the structure of states.

We can construct a functor $\mathrm{Hom}_{\mathbf{Conv}}(-,[0,1]) : \mathbf{Conv}^{\mathrm{op}}\to \mathbf{EMod}$
where $\mathrm{Hom}_{\mathbf{Conv}}(S,[0,1])$ has an effect module structure for each $S\in\mathbf{Conv}$ with the definition of following :
$f\bot g \Leftrightarrow f(x)+g(x)\le 1 \mspace{5mu} \text{for all} \mspace{5mu}x\in S$, 
the pointwise sum $(f\ovee g)(x)=f(x)+g(x)$,
the pointwise orthosupplement $(\neg f)(x)=1-f(x)$
and the pointwise multiplication $(r\cdot f)(x)=r(f(x))$
for all $f,g\in\mathrm{Hom}_{\mathbf{Conv}}(S,[0,1])$.

We can also construct a functor $\mathrm{Hom}_{\mathbf{EMod}}(-,[0,1]) : \mathbf{Emod}^{\mathrm{op}}\to\mathbf{Conv}$ where $\mathrm{Hom}_{\mathbf{EMod}}(E,[0,1])$ has a convex space structure for each $E\in\mathbf{EMod}$ with the definition of following : the pointwise convex operation  $c_{p}(f,g)=pf+(1-p)g$ for all $f,g\in\mathrm{Hom}_{\mathbf{EMod}}(E,[0,1])$.

Under this setting, we can state a dual adjunction between $\mathbf{EMod }$ and $\mathbf{Conv}$.
\begin{thm}
functors $\mathrm{Hom}_{\mathbf{Conv}}(-,[0,1]) : \mathbf{Conv}^{\mathrm{op}}\to \mathbf{EMod}$ and $\mathrm{Hom}_{\mathbf{EMod}}(-,[0,1]) : \mathbf{Emod}^{\mathrm{op}}\to\mathbf{Conv}$ give a dual adjunction between a category of effect modules $\mathbf{EMod }$ and a category of convex spaces $\mathbf{Conv}$.
\end{thm}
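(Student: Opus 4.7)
The plan is to exhibit a natural bijection
$$\mathbf{EMod}\bigl(E,\mathrm{Hom}_{\mathbf{Conv}}(S,[0,1])\bigr)\;\cong\;\mathbf{Conv}\bigl(S,\mathrm{Hom}_{\mathbf{EMod}}(E,[0,1])\bigr)$$
for all $E\in\mathbf{EMod}$ and $S\in\mathbf{Conv}$, natural in both arguments; this is exactly the content of a dual adjunction between the two Hom-functors. The bijection will be the usual currying: to $\phi:E\to\mathrm{Hom}_{\mathbf{Conv}}(S,[0,1])$ associate $\widehat{\phi}:S\to\mathrm{Hom}_{\mathbf{EMod}}(E,[0,1])$ defined by $\widehat{\phi}(s)(e):=\phi(e)(s)$, with the symmetric assignment in the other direction.

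The setup step is to observe that $[0,1]$ is simultaneously an object of $\mathbf{EMod}$ (with truncated partial sum, $\neg r=1-r$, and scalar action) and of $\mathbf{Conv}$ (with $c_r(x,y)=rx+(1-r)y$), and that these structures are compatible in the sense that evaluation at a point is a morphism in the appropriate category. Concretely, for fixed $s\in S$ the evaluation map $\mathrm{ev}_s:\mathrm{Hom}_{\mathbf{Conv}}(S,[0,1])\to[0,1]$ is an $\mathbf{EMod}$-morphism because $\ovee$, $\neg$, and the $[0,1]$-action on this hom-set are defined pointwise; symmetrically, for fixed $e\in E$, $\mathrm{ev}_e$ is a $\mathbf{Conv}$-morphism on $\mathrm{Hom}_{\mathbf{EMod}}(E,[0,1])$.

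Next I would verify that the transpose is well-typed. For $\phi\in\mathbf{EMod}(E,\mathrm{Hom}_{\mathbf{Conv}}(S,[0,1]))$ and each $s\in S$, $\widehat{\phi}(s)=\mathrm{ev}_s\circ\phi$ is a composite of effect-module morphisms, hence lies in $\mathrm{Hom}_{\mathbf{EMod}}(E,[0,1])$. That $\widehat{\phi}$ itself preserves convex combinations is a one-line check that hands the combination through $\phi(e)$, which is itself a convex-space morphism:
$$\widehat{\phi}(c_r(s,s'))(e)=\phi(e)(c_r(s,s'))=r\phi(e)(s)+(1-r)\phi(e)(s')=c_r(\widehat{\phi}(s),\widehat{\phi}(s'))(e).$$
The reverse assignment works symmetrically, and the two maps are mutually inverse since currying is involutive on set-theoretic functions of two arguments.

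The step I expect to require the most bookkeeping, rather than genuine difficulty, is naturality: given $f:S'\to S$ in $\mathbf{Conv}$ and $g:E\to E'$ in $\mathbf{EMod}$, one must check that transposition intertwines precomposition by $\mathrm{Hom}_{\mathbf{Conv}}(f,[0,1])$ on the $\mathbf{EMod}$-side with postcomposition by $\mathrm{Hom}_{\mathbf{EMod}}(g,[0,1])$ on the $\mathbf{Conv}$-side. Both sides reduce to the formula $\phi(g(e))(f(s'))$, so naturality is essentially tautological. The entire proof is therefore a structural verification whose substance is concentrated in (i) compatibility of the two algebraic structures carried by $[0,1]$, and (ii) the pointwise definition of the hom-set structures, which is what makes evaluation a morphism in each respective category.
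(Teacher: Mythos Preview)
Your argument is correct: the dualizing-object (currying) construction you give is exactly the standard proof of this kind of adjunction, and your verifications that $\widehat{\phi}(s)$ is an effect-module morphism, that $\widehat{\phi}$ is affine, and that naturality holds are all sound. Note, however, that the paper itself does not supply a proof of this theorem; it is stated as a known result with a citation to \cite{BJ}, where precisely the currying argument you outline is carried out. So there is nothing to compare against in the paper proper, but your proposal matches the approach of the cited source.
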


\subsection{Duality between $\mathbf{BEM }$ and $\mathbf{CCH}$}
In this section, we see the fact that  the dual adjunction between $\mathbf{EMod }$ and $\mathbf{Conv}$ restricts to a duality between $\mathbf{BEM}$ and $\mathbf{CCH}$ \cite{BJJM1}.

To this end, we state the definition  of  Banach effect modules and introduce a category of Banach effect modules $\mathbf{BEM}$.
\begin{dfn}
A Banach effect module is an effect module which is complete with the metric given by following:
\begin{eqnarray*}
d(x,y)=\mathrm{max}(\mathrm{inf}\{r\in [0,1] | \frac{1}{2}x\le \frac{1}{2}y\ovee\frac{r}{2}1\},\\
\mathrm{inf}\{r\in [0,1] | \frac{1}{2}y\le \frac{1}{2}x\ovee\frac{r}{2}1\})
\end{eqnarray*}
where $x$ and $y$ are elements of an effect module.
A partial order on an effect module is given by $x\le y$ if and only if $x\ovee z=y$ for some element $z$.  
\end{dfn}
This yields category Banach effect modules $\mathbf{BEM}$ whose objects consists of Banach effect Modules and arrows consist of continuous effect module homomorphisms.

Next, we state the definition of convex compact Hausdorff spaces.
\begin{dfn}
Convex compact Hausdorff spaces is convex spaces with a compact Haudorff topology.
\end{dfn}
Before proceeding to a category of convex compact Hausdorff spaces, we mention a separating property.
The separating property on a convex compact Hausdorff space $S$ means that if $x\neq y$ for $x,y\in S$, there is a continuous convex homomorphism $s: S\to[0,1]$ such that $s(x)\neq s(y)$.
With the separating property, a convex compact Hausdorff space can be regarded as a subspace of a locally convex topological vector space.
The separating property is assumed in this paper.

A category of convex compact Hausdorff spaces $\mathbf{CCH}$ consists of convex compact Hausdorff spaces as objects and continuous convex homomorphisms.

 Now, we can state the categorical duality.
 Note that this is a duality between events and states.
 \begin{thm}
functors $\mathrm{Hom}_{\mathbf{CCH}}(-,[0,1]) : \mathbf{CCH}^{\mathrm{op}}\to \mathbf{BEM}$ and $\mathrm{Hom}_{\mathbf{BEM}}(-,[0,1]) : \mathbf{BEM}^{\mathrm{op}}\to\mathbf{CCH}$ give a duality between categories $\mathbf{BEM}$ and $\mathbf{CCH}$, where $\mathrm{Hom}_{\mathbf{CCH}}(S,[0,1])$ has an uniform topology and $\mathrm{Hom}_{\mathbf{BEM}}(E,[0,1])$ has a locally convex topology.
\end{thm}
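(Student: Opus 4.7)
The plan is to upgrade the dual adjunction of the previous theorem to a dual equivalence by (i) verifying that the two Hom functors restrict to functors $\mathbf{CCH}^{\mathrm{op}}\to\mathbf{BEM}$ and $\mathbf{BEM}^{\mathrm{op}}\to\mathbf{CCH}$, and (ii) showing that the unit $\eta_E : E \to \mathrm{Hom}_{\mathbf{CCH}}(\mathrm{Hom}_{\mathbf{BEM}}(E,[0,1]),[0,1])$ and the counit $\varepsilon_S : S\to\mathrm{Hom}_{\mathbf{BEM}}(\mathrm{Hom}_{\mathbf{CCH}}(S,[0,1]),[0,1])$ of the existing adjunction are both homeomorphic isomorphisms when $E\in\mathbf{BEM}$ and $S\in\mathbf{CCH}$. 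Naturality is inherited from the adjunction, so only the pointwise isomorphism statements need work.

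For well-definedness, I would first check that $\mathrm{Hom}_{\mathbf{BEM}}(E,[0,1])$, endowed with the pointwise (locally convex) topology, is a compact convex Hausdorff space with the separating property. Compactness comes from realizing this hom-set as a closed subspace of the compact product $[0,1]^{E}$ (closedness encodes preservation of $1$, of the partial sum $\ovee$, and of scalar multiplication, each of which is a closed condition pointwise), Hausdorffness from the product topology, convexity from the pointwise convex combination defined earlier, and the separating property holds tautologically since points of the hom-set are distinguished by evaluation at elements of $E$. Dually, $\mathrm{Hom}_{\mathbf{CCH}}(S,[0,1])$ with the uniform norm is a closed subspace of the Banach space $C(S,\mathbb{R})$, so the effect-module metric agrees with the supremum metric and completeness is automatic; hence this hom-set sits in $\mathbf{BEM}$.

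Next I would handle the counit $\varepsilon_S$. Injectivity is immediate from the separating property: two distinct points of $S$ are separated by some continuous convex map $S\to[0,1]$, i.e.\ by an element of $\mathrm{Hom}_{\mathbf{CCH}}(S,[0,1])$, so they induce different evaluation homomorphisms. Continuity is by construction of the locally convex topology on the double dual. The main content is surjectivity: given an effect-module homomorphism $\phi: \mathrm{Hom}_{\mathbf{CCH}}(S,[0,1])\to[0,1]$, I want to produce $s\in S$ with $\phi=\mathrm{ev}_s$. The plan is to view $S$, via the separating property, as a compact convex subset of the locally convex space $\mathrm{Hom}_{\mathbf{CCH}}(S,\mathbb{R})^{*}$ and to check that the image $\varepsilon_S(S)$ is closed; the complement of the image would then contain an affine continuous functional separating a putative extra point, contradicting that $\phi$ preserves $\ovee$, $\neg$ and scalar action on all of $\mathrm{Hom}_{\mathbf{CCH}}(S,[0,1])$. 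A Hahn--Banach / Krein--Milman style separation argument, together with compactness of $\varepsilon_S(S)$ (continuous image of compact), closes the gap.

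Finally I would treat the unit $\eta_E$, which is the genuine obstacle: this is a Gelfand-type representation theorem for Banach effect modules, asserting that every $x\in E$ is recovered from the function $\omega\mapsto \omega(x)$ on states. Injectivity of $\eta_E$ is equivalent to saying that states separate events on $E$, which follows from the explicit metric in the definition of $\mathbf{BEM}$ once one shows that the collection of states is order-separating; one argues that if $\eta_E(x)=\eta_E(y)$, the metric formula forces $d(x,y)=0$ and hence $x=y$ by the Hausdorffness of the Banach effect module. Isometry (hence continuity of $\eta_E^{-1}$) reduces to showing $\|x\|=\sup_\omega\omega(x)$, again via the metric; surjectivity is the hard part, requiring that every effect-module homomorphism $\mathrm{Hom}_{\mathbf{BEM}}(E,[0,1])\to[0,1]$ from the state space comes from evaluation at some $x\in E$. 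I would follow the approach of \cite{BJJM1,BJJM2}: embed $E$ isometrically into the $[0,1]$-valued continuous functions on its state space by $\eta_E$, show the image is closed under the $\mathbf{BEM}$-operations and the supremum metric, and invoke a Stone--Weierstrass type argument adapted to effect modules (separation of points is the input; closure under $\ovee$, $\neg$, and $[0,1]$-action provides the algebraic richness) to conclude density, whence surjectivity by completeness of $E$. This Stone--Weierstrass-style step is where the proof is most delicate and where I would lean most heavily on the cited literature.
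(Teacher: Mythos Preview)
The paper does not supply its own proof of this theorem: it is stated as a citation of \cite{BJJM1} (and implicitly \cite{BJJM2}), with no argument given in the text. So there is no in-paper proof to compare your proposal against.

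That said, your outline is broadly in line with how the cited literature establishes the result. The well-definedness checks (compactness of the state space via a closed-subset-of-$[0,1]^E$ argument, completeness of the predicate space via the sup norm on $C(S,\mathbb{R})$) are the standard moves. Your identification of the unit $\eta_E$ as the substantive step---requiring that states separate effects and a Stone--Weierstrass-type density argument---matches the structure of the Jacobs--Mandemaker proof. One cautionary remark: your claim that injectivity of $\eta_E$ ``follows from the explicit metric\ldots once one shows that the collection of states is order-separating'' hides the real work; the existence of enough states on an abstract Banach effect module is itself a nontrivial Hahn--Banach-style extension result in \cite{BJJM1}, not something the metric formula alone delivers. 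Similarly, the surjectivity of $\eta_E$ in \cite{BJJM1} goes through an order-unit-space / Kadison-type representation rather than a direct Stone--Weierstrass argument, so your sketch at that point is more heuristic than the actual proof. But as an overall plan your proposal is sound and exceeds what the paper itself provides.
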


\subsection{Examples and their consequences}
In this subsection, we present some examples of (Banach) effect module and convex (compact Hausdorff) space.
They are significant objects in this paper.
By applying a dual adjunction between $\mathbf{EMod}$ and $\mathbf{Conv}$ and a duality between $\mathbf{BEM}$ and $\mathbf{CCH}$ to these examples,  we can obtain an useful result.

First, we will present two examples of (Banach) effect modules.
\begin{ex}
A set of $[0,1]$-valued continuous functions on a compact Hausdorff space $C(X,[0,1])$ is an example of effect modules with the interpretation that $f_1\ovee f_2=f_1+f_2$ if $f_1+f_2\le 1$, a neutral element is a zero function and $\neg f=1-f$ for $f,f_1,f_2\in C(X,[0,1])$.
If a supremum norm topology is imposed on $C(X,[0,1])$, it becomes a Banach effect module. 
\end{ex} 
\begin{ex}
A set of effect operators on a Hilbert space $Ef(\mathcal{H})=\{E\in B(\mathcal{H}) | E^*=E, 0\le E\le I\}$
is an example of effect modules with the interpretation that $E_1\ovee E_2=E_1+E_2$ if $E_1+E_2\le I$, a neutral element is a zero operator and $\neg E=I-E$ for $E,E_1,E_2\in Ef(\mathcal{H})$.
If an operator norm topology is imposed on $Ef(\mathcal{H})$, it becomes a Banach effect module.
\end{ex} 
Note that $C(X,[0,1])$ corresponds to classical events and $Ef(\mathcal{H})$ corresponds to quantum events.

Next, we will present two examples of convex (compact Hausdorff) spaces.
\begin{ex}
A set of probability Radon measures on a compact Hausdorff space $PR(X)$ is an example of convex spaces with the convex operation $c_r(\mu_1,\mu_2)=r\mu_1+(1-r)\mu_2$ for $\mu_1,\mu_2\in PR(X)$.
If a locally convex topology is induced by a family of semi-norms $\{|\int_X fd(\cdot)| | f\in C(X)\}$, $PR(X)$ becomes  a convex compact Hausdorff space.
\end{ex}

\begin{ex}
A set of density operators on a Hilbert space $DM(\mathcal{H})=\{\rho\in T(\mathcal{H}) | \rho^*=\rho, Tr(\rho)=1\}$ is an example of convex spaces with the convex operation $c_r(\rho_1,\rho_2)=r\rho_1+(1-r)\rho_2$ for $\rho_1,\rho_2\in DM(\mathcal{H})$.
If a locally convex topology is induced by a family of semi-norms $\{|Tr[(\cdot)B]| | E\in B(\mathcal{H})\}$, $DM(\mathcal{H})$  becomes a convex compact Hausdorff space.
\end{ex}

Note that $PR(X)$ corresponds to classical states and $DM(\mathcal{H})$ corresponds to quantum states.

By applying the duality to these examples, we can obtain an useful result.
In following discussion, we assume topologies in examples above.
\begin{prop}
Let $\Phi$ a continuous affine functional on $PR(X)$.
Then there exist an unique $f\in C(X,[0,1])$ such that $\Phi(\mu)=\int_X fd\mu$  for every $\mu\in PR(X)$
\end{prop}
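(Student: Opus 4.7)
The plan is to recognize the statement as an instance of the $\mathbf{BEM}$–$\mathbf{CCH}$ duality from Theorem 2 applied to the pair $C(X,[0,1])$ and $PR(X)$. Concretely, I would introduce the comparison map $T : C(X,[0,1]) \to \mathrm{Hom}_{\mathbf{CCH}}(PR(X),[0,1])$ by $T(f)(\mu) = \int_X f\, d\mu$, and reduce the proposition to showing that $T$ is a bijection; existence of $f$ is surjectivity of $T$ and uniqueness is injectivity.

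Well-definedness of $T$ is routine: affineness of $T(f)$ is immediate from linearity of the integral, and continuity on $PR(X)$ is built into the generating family of semi-norms $\{|\int g\, d(\cdot)| : g \in C(X)\}$ from Example 3. Uniqueness is then injectivity: if $T(f) = T(g)$, testing against the Dirac measures $\delta_x \in PR(X)$ yields $f(x) = g(x)$ for every $x \in X$, so $f = g$.

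The remaining, and principal, step is surjectivity. Given a continuous affine $\Phi : PR(X) \to [0,1]$, I would define the candidate $f(x) := \Phi(\delta_x)$; it is automatically $[0,1]$-valued, and continuity of $f$ follows from continuity of the map $x \mapsto \delta_x : X \to PR(X)$, which in turn holds because the topology on $PR(X)$ is generated by the evaluations $\mu \mapsto \int g\, d\mu$ and each composite $x \mapsto \int g\, d\delta_x = g(x)$ is continuous. The main obstacle is then verifying $\Phi(\mu) = \int_X f\, d\mu$ for every $\mu \in PR(X)$: for a finitely supported $\mu = \sum_i r_i \delta_{x_i}$ this is pure affineness of $\Phi$, but the passage to general $\mu$ requires that the convex hull of Dirac measures be dense in $PR(X)$ (valid for compact Hausdorff $X$) together with joint continuity of both sides in $\mu$. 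In categorical terms this density is precisely the statement that the counit of the Theorem 2 duality at $C(X,[0,1])$ is an isomorphism once one identifies $\mathrm{Hom}_{\mathbf{BEM}}(C(X,[0,1]),[0,1])$ with $PR(X)$ via the Riesz–Markov–Kakutani theorem, so the abstract duality can be invoked in place of the hands-on density argument.
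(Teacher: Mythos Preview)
Your proposal is correct. Both your argument and the paper's rest on the same two inputs---the Riesz--Markov--Kakutani identification of $PR(X)$ with $\mathrm{Hom}_{\mathbf{BEM}}(C(X,[0,1]),[0,1])$ and the $\mathbf{BEM}$--$\mathbf{CCH}$ duality of Theorem~2---but they are assembled differently.

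The paper's proof is purely categorical: it considers the adjunction bijection between $\mathrm{Hom}_{\mathbf{Conv}}(PR(X),\mathrm{Hom}_{\mathbf{EMod}}(C(X,[0,1]),[0,1]))$ and $\mathrm{Hom}_{\mathbf{EMod}}(C(X,[0,1]),\mathrm{Hom}_{\mathbf{Conv}}(PR(X),[0,1]))$, uses Riesz to put one side in integral form, transports that form across the bijection, and then invokes the unit/counit isomorphism from Theorem~2 to identify $\mathrm{Hom}_{\mathbf{CCH}}(PR(X),[0,1])$ with $C(X,[0,1])$. No Dirac measures or density arguments appear explicitly; everything is hidden inside the duality machinery. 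Your approach is more constructive: you write down the comparison map $T$, prove injectivity by evaluating at $\delta_x$, and for surjectivity produce the explicit candidate $f(x)=\Phi(\delta_x)$ and close the gap either by weak-$*$ density of finitely supported measures or by invoking Theorem~2 as a black box. The payoff of your route is an explicit formula for the representing $f$ and a proof that can be read without unpacking the adjunction; the payoff of the paper's route is that it stays entirely within the categorical framework already set up and avoids any separate topological lemma about Dirac measures.
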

\begin{proof}
Let $g\in\mathrm{Hom}_{\mathbf{Conv}}(PR(X),\mathrm{Hom}_{\mathbf{Emod}}(C(X,[0,1]),[0,1]))$ and $h\in\mathrm{Hom}_{\mathbf{Emod}}(C(X,[0,1]),\mathrm{Hom}_{\mathbf{Conv}}(PR(X),[0,1])$.
Applying the Riesz representation theorem, $g$ can be written in a form $\int_X(-)d(-)$.
By the dual adjunction, there is a bijective correspondence between $g$ and $h$.
Combining these,  $h$ can be written in the form $\int_X(-)d(-)$.
By the duality, there is a bijective correspondence between $C(X,[0,1])$ and $\mathrm{Hom}_{\mathbf{CCH}}(PR(X),[0,1])$ so $\Phi\in\mathrm{Hom}_{\mathbf{CCH}}(PR(X),[0,1])$ can be written in an integral form $\int_Xfd(-)$ using $f\in C(X,[0,1])$.
\end{proof}

\section{duality between Quantization and Measurement }
In this section, we introduce a notion of quantization of events and measurement of states.
We establish the fact that there is a duality between quantization of events and measurement of states.
After that we will apply that duality to concrete examples of quantization and measurement.

\subsection{Duality between Quantization of Events and Measurement of States}
In this subsection, we define what quantization of events and measurement of states mean.
We also investigate a relation between quantization of events and measurement of states for a later discussion. 
\begin{dfn}
We call a Banach Effect module homomorphism $Q:C(X,[0,1])\to Ef(\mathcal{H})$ quantization of events from compact Hausdorff space $X$ to Hilbert space $\mathcal{H}$.
\end{dfn}
\begin{dfn}
We call a continuous convex homomorphism $M:DM(\mathcal{H})\to PR(X)$ measurement of states from Hilbert space $\mathcal{H}$ to compact Hausdorff space $X$.
\end{dfn}
Quantization of events is a map from classical events to quantum events.
Measurement of states is a map from quantum states to classical states.
Noting the fact that $C(X,[0,1])$ is dual to $PR(X)$ and $Ef(\mathcal{H})$ is dual to $DM(\mathcal{H})$ by the $\mathbf{BEM}$-$\mathbf{CCH}$ duality, we may notice quantization of events is dual to measurements of states.

First, let us induce measurement of states from a dual of quantization of events. 
Since measurement of states can be regarded as a map $(Ef(\mathcal{H})\to [0,1])\to (C(X,[0,1])\to [0,1])$, we require following diagram commute:\\
\begin{tikzpicture}[auto]
\node (a) at (0, 1.5) {$C(X,[0,1])$}; \node (x) at (3, 1.5) {$[0,1]$};
\node (b) at (0, 0) {$Ef(\mathcal{H})$};   \node (y) at (3, 0) {$[0,1]$};
\draw[->] (a) to node {$\alpha$} (x);
\draw[->] (x) to node {$id$} (y);
\draw[->] (a) to node[swap] {$Q$} (b);
\draw[->] (b) to node[swap] {$\beta$} (y);
\end{tikzpicture}
\\In other words, for maps $\alpha :C(X,[0,1])\to [0,1]$, $\beta :Ef(\mathcal{H})\to [0,1]$ and quantization of events $Q$,
\begin{equation*}
\alpha(f)=\beta\circ Q(f)
\end{equation*}
 holds for all $f\in C(X,[0,1])$. 
Since quantization of events $Q$ is an unital positive linear map from $C(X,[0,1])$ to $Ef(\mathcal{H})$, there exists an unique positive operator valued measure (POVM) $E$ such that 
\begin{equation*}
Q(f)=\int_X f dE
\end{equation*}
for all $f\in C(X,[0,1])$.
In addition, by the Bucsh theorem \cite{PB} there exists an unique density operator $\rho$ such that 
\begin{equation*}
\beta (A)=\mathrm{Tr}[\rho A]
\end{equation*}
 for all $A\in Ef(\mathcal{H})$.
 By equations above and the Rieaz representation theorem, an unique measure for $\alpha$ is $\mathrm{Tr}[\rho E]$.
 So a dual of quantization of events as a map $\beta\to\alpha$ induces measurement of states $M: \rho\in DM(\mathcal{H})\to \mathrm{Tr}[\rho E]\in PR(X)$ which is characterized by the  POVM appeared in quantization of events.
 Summing up these results, we can state as following:
 \begin{prop}
 Quantisztion  of events $Q:C(X,[0,1])\to Ef(\mathcal{H})$ can be written as $Q=\int_X (-)dE$ by using a POVM $E$ and induces measurement of states $M:DM(\mathcal{H})\to PR(X)$ which can be written as $M=\mathrm{Tr}[(-)E]$.
 This fact can be represented in following diagram:\\
\begin{tikzpicture}[auto]
\node (a) at (0, 1.5) {$C(X,[0,1])$}; \node (x) at (3, 1.5) {$[0,1]$};
\node (b) at (0, 0) {$Ef(\mathcal{H})$};   \node (y) at (3, 0) {$[0,1]$};
\node (s) at (4.5,0) {$DM(\mathcal{H})$};   \node (t) at (4.5,1.5) {$PR(X)$};
\draw[->] (a) to node {$\int_X(-)d(M\rho)$} (x);
\draw[->] (x) to node {$id$} (y);
\draw[->] (a) to node[swap] {$Q=\int_X (-)dE$} (b);
\draw[->] (b) to node[swap] {$\mathrm{Tr}[\rho (-)]$} (y);
\draw[->] (s) to node[swap] {$M=\mathrm{Tr}[(-)E]$} (t);
\end{tikzpicture}
\end{prop}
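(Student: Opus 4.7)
The plan is to establish the two halves of the proposition separately: first, that $Q$ admits a POVM representation, and second, that the induced $M$ has the claimed trace form. The second half is a short computation once the POVM is in hand, so the real content sits in the first half.

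For the POVM representation, I would first extend the $[0,1]$-valued BEM homomorphism $Q$ to a unital positive linear map $\widetilde{Q}: C(X,\mathbb{R}) \to B(\mathcal{H})_{\text{sa}}$ by writing a real-valued continuous function as a signed affine combination of $[0,1]$-valued functions; the BEM axioms (preservation of $\ovee$, of the $[0,1]$-action, and of the unit), together with continuity in the supremum norm, force $\widetilde{Q}$ to be $\mathbb{R}$-linear, positive and unital. Next I would apply Riesz--Markov--Kakutani pointwise to the sesquilinear form $(\psi,\psi') \mapsto \langle \psi \mid \widetilde{Q}(f)\, \psi' \rangle$, obtaining for each $\psi,\psi'$ a unique regular complex Borel measure; polarisation and boundedness then assemble these scalar measures into a unique regular $Ef(\mathcal{H})$-valued measure $E$ on the Borel $\sigma$-algebra of $X$ such that $Q(f)=\int_X f\, dE$ for all $f\in C(X,[0,1])$.

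For the derivation of $M$, fix $\rho\in DM(\mathcal{H})$ and set $\beta_\rho(A)=\mathrm{Tr}[\rho A]$; by the Busch-theorem step invoked just before the proposition, $\beta_\rho$ is the generic continuous convex functional on $Ef(\mathcal{H})$. Then $\alpha_\rho:=\beta_\rho\circ Q$ is a continuous affine functional on $C(X,[0,1])$, so by the previous proposition it has the form $\alpha_\rho(f)=\int_X f\, d\mu_\rho$ for a unique $\mu_\rho\in PR(X)$. Interchanging trace and POVM integral gives $\mathrm{Tr}[\rho\, Q(f)] = \int_X f\, d(\mathrm{Tr}[\rho E(\cdot)])$, whence $\mu_\rho=\mathrm{Tr}[\rho E]$ and $M(\rho)=\mathrm{Tr}[\rho E]$, matching the commutative diagram through the $\mathbf{BEM}$--$\mathbf{CCH}$ duality.

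The main obstacle I expect is $\sigma$-additivity of $E$ in the weak operator topology: norm continuity of $Q$ on $C(X,[0,1])$ does not directly yield $\sigma$-additivity of the Borel extension, so one must invoke monotone convergence on increasing sequences of $[0,1]$-valued continuous approximations to indicator functions of open sets, together with the fact that $Q$ is bounded above by the unit, to upgrade finite additivity to countable additivity. Once this is in place, the exchange of trace and integral in the second half is justified by standard approximation on simple functions, and the remainder is essentially bookkeeping.
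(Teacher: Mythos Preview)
Your proof follows the same route as the paper: represent $Q$ by a POVM via an operator-valued Riesz--Markov argument, identify $\beta$ with $\mathrm{Tr}[\rho\,\cdot\,]$ via Busch's theorem, and read off $M(\rho)=\mathrm{Tr}[\rho E]$ from the commuting square using the scalar Riesz representation. The only difference is that the paper invokes the POVM representation of $Q$ as a known fact in one line, whereas you sketch its derivation via sesquilinear forms (and flag the $\sigma$-additivity issue, which the paper does not discuss).
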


Second, let us induce quantization of events from a dual of measurement of states.
Since quantization of events can be regarded as a map $(DM(\mathcal{H})\to [0,1])\to (PR(X)\to [0,1])$, we require following diagram commute:\\
\begin{tikzpicture}[auto]
\node (a) at (0, 1.5) {$DM(\mathcal{H})$}; \node (x) at (3, 1.5) {$[0,1]$};
\node (b) at (0, 0) {$PR(X)$};   \node (y) at (3, 0) {$[0,1]$};
\draw[->] (a) to node {$\gamma$} (x);
\draw[->] (x) to node {$id$} (y);
\draw[->] (a) to node[swap] {$M$} (b);
\draw[->] (b) to node[swap] {$\delta$} (y);
\end{tikzpicture}
\\In other words, for maps $\gamma :DM(\mathcal{H})\to [0,1]$, $\beta :PR(X)\to [0,1]$ and measurement of states $M$,
\begin{equation*}
\gamma(\rho)=\delta\circ M(\rho)
\end{equation*}
 holds for all $\rho\in DM(\mathcal{H})$. 
Since a dual of trace class operator on a Hilbert space is bounded operator on the Hilbert space, $T(\mathcal{H})^*\cong B(\mathcal{H})$, there exists an unique POVM $E$ such that 
\begin{equation*}
M(\rho)=\mathrm{Tr}[\rho E]
\end{equation*}
 for all $\rho\in DM(\mathcal{H})$.
In addition, by the proposition 1 there exists an unique $f\in C(X,[0,1])$ such that 
\begin{equation*}
\delta(\mu)=\int_X fd\mu
\end{equation*}
for all $\mu\in PR(X)$.
By quations above and $T(\mathcal{H})^*\cong B(\mathcal{H})$, an unique effect for $\gamma$ is $\int_x fdE$.
So a dual of measurement of states as a map $\gamma\to\delta$ induces quantization of events $Q: f\in C(X,[0,1]) \to \int_x fdE\in Ef(\mathcal{H})$ which is characterized by the POVM appeared in measurement of states.
 Summing up these results, we can state as following:
 \begin{prop}
 Measurement of states $M:DM(\mathcal{H})\to PR(X)$ can be written as  $M=\mathrm{Tr}[(-)E]$ by using a POVM $E$ and induces quantization of events $Q:C(X,[0,1])\to Ef(\mathcal{H})$ which can be written as $Q=\int_X (-)dE$.
 This fact can be represented in following diagram.\\
\begin{tikzpicture}[auto]
\node (a) at (0, 1.5) {$DM(\mathcal{H})$}; \node (x) at (3, 1.5) {$[0,1]$};
\node (b) at (0, 0) {$PR(X)$};   \node (y) at (3, 0) {$[0,1]$};
\node (s) at (4.5,0) {$C(X,[0,1])$};   \node (t) at (4.5,1.5) {$Ef(\mathcal{H})$};
\draw[->] (a) to node {$\mathrm{Tr}[(-)Q(f)]$} (x);
\draw[->] (x) to node {$id$} (y);
\draw[->] (a) to node[swap] {$M=\mathrm{Tr}[(-)E]$}(b);
\draw[->] (b) to node[swap] {$\int_X fd(-)$} (y);
\draw[->] (s) to node[swap]  {$Q=\int_X (-)dE$} (t);
\end{tikzpicture}
\end{prop}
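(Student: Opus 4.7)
The plan is to organize the informal argument sketched in the paragraphs preceding the proposition into two clean stages that mirror the commutative diagram.

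Stage one is to derive the POVM representation $M=\mathrm{Tr}[(-)E]$ directly from $M$ being a continuous convex homomorphism $DM(\mathcal{H})\to PR(X)$. I would first extend $M$ by positive homogeneity and linearity to a bounded positive operator $\widetilde{M}:T(\mathcal{H})\to M(X)$, where $M(X)$ is the space of finite Radon measures on $X$. Then for each Borel subset $\Delta\subseteq X$ the scalar functional $\rho\mapsto(M\rho)(\Delta)$ is linear and bounded on $T(\mathcal{H})$, so by the duality $T(\mathcal{H})^{*}\cong B(\mathcal{H})$ it is represented by a unique operator $E(\Delta)\in B(\mathcal{H})$ satisfying $\mathrm{Tr}[\rho E(\Delta)]=(M\rho)(\Delta)$. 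Positivity of $M$ then gives $E(\Delta)\geq 0$, the normalization $(M\rho)(X)=1$ for every state gives $E(X)=I$, and $\sigma$-additivity of the measure $M\rho$ transfers to weak $\sigma$-additivity of $\Delta\mapsto E(\Delta)$. Thus $E$ is a POVM with $M=\mathrm{Tr}[(-)E]$.

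Stage two is to read off $Q$ from the dual diagram. Given any $\delta\in\mathrm{Hom}_{\mathbf{CCH}}(PR(X),[0,1])$, define $\gamma:=\delta\circ M$, which is a continuous convex homomorphism $DM(\mathcal{H})\to[0,1]$. Applying Proposition 1 to the classical side produces a unique $f\in C(X,[0,1])$ with $\delta(\mu)=\int_X f\,d\mu$. Substituting $\mu=M\rho=\mathrm{Tr}[\rho E]$ and using the exchange of integration and trace, one obtains
\begin{equation*}
\gamma(\rho)=\int_X f\,d(\mathrm{Tr}[\rho E])=\mathrm{Tr}\!\left[\rho\,\textstyle\int_X f\,dE\right].
\end{equation*}
The quantum analogue of Proposition 1, obtained by the same duality via $T(\mathcal{H})^{*}\cong B(\mathcal{H})$, identifies $\gamma$ with a unique effect, which we must therefore call $Q(f):=\int_X f\,dE\in Ef(\mathcal{H})$. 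Linearity of the integral makes $Q$ preserve partial sums and $[0,1]$-scalar action; $Q(1)=E(X)=I$ gives preservation of units; and $\|Q(f)\|\leq\|f\|_\infty$ gives continuity, so $Q$ is a Banach effect module homomorphism.

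The main obstacle is making rigorous sense of $\int_X f\,dE$ as an operator and justifying the swap with the trace. The cleanest route is to \emph{define} $\int_X f\,dE$ weakly: the functional $\rho\mapsto\int_X f\,d(\mathrm{Tr}[\rho E])$ is linear and bounded on $T(\mathcal{H})$ with norm at most $\|f\|_\infty$, so by $T(\mathcal{H})^{*}\cong B(\mathcal{H})$ it is represented by a unique $A_f\in B(\mathcal{H})$; one then takes $Q(f):=A_f$. Positivity of $f$ and of $E$ forces $0\leq A_f\leq I$ when $f\in C(X,[0,1])$, placing $Q(f)$ in $Ef(\mathcal{H})$. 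Under this definition the interchange holds tautologically, and the two stages fit together to give exactly the diagram asserted in the statement.
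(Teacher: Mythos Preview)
Your proposal is correct and follows essentially the same route as the paper's argument preceding the proposition: the duality $T(\mathcal{H})^{*}\cong B(\mathcal{H})$ yields the POVM representation of $M$, Proposition~1 represents $\delta$ by an $f\in C(X,[0,1])$, and the same duality then identifies the effect $Q(f)=\int_X f\,dE$. Your two stages add rigor---the extension of $M$ to $T(\mathcal{H})$, the weak definition of the operator-valued integral, and the explicit verification that $Q$ is a $\mathbf{BEM}$-homomorphism---but these are elaborations of the paper's sketch rather than a different approach.
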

From the proposition 2 and 3, we can see that quantization of events and measurement of states are in a dual relation characterized by a single POVM.

\subsection{Some Examples of Duality}

In this subsection, we will see how the duality works in concrete examples of quantization and measurements.

First example is a generalized system of imprimitivity.
This is a system of imprimitivity \cite{GM} employing a POVM instead of a projection valued measure.
This provides canonical quantization on a homogeneous space.
For a detailed explanation, see \cite{KL}.

Let $G$ be a topological group and $X$ be a $G$-space.
A generalized system of imprimitivity on $G$ is a quadruple $(\mathcal{H},U,X,P)$ where $U$ is an unitary representation of $G$ on a Hilbert apace $\mathcal{H}$ and POVM $E$ on $X$ such that
\begin{equation*}
U(g)E(\Delta)U(g)^{-1}=E(g\Delta) 
\end{equation*}  
holds for all $g\in G$ and all Borel sets $\Delta\subset X$.
This condition is equivalent to the condition
\begin{equation*}
U(g)Q(f)U(g)^{-1}=Q(L_gf) 
\end{equation*} 
for all $g\in G$ and $f\in C(X,[0,1])$ where $Q$ is quantization of events, and $L_g$ is a left translation defined by $(L_gf)(x)=f(g^{-1}x)$ for all $f\in C(X,[0,1])$, $x\in X$ and $g\in G$.
So a generalized system of imprimitivity can be represented in a commuting diagram involving quantization of events:\\
\begin{tikzpicture}[auto]
\node (a) at (0, 1.5) {$C(X,[0,1])$}; \node (x) at (3, 1.5) {$Ef(\mathcal{H})$};
\node (b) at (0, 0) {$C(X,[0,1])$};   \node (y) at (3, 0) {$Ef(\mathcal{H})$};
\draw[->] (a) to node {$Q$} (x);
\draw[->] (x) to node {$U(g)(-)U(g)^{-1}$} (y);
\draw[->] (a) to node[swap] {$L_g$} (b);
\draw[->] (b) to node[swap] {$Q$}(y);
\end{tikzpicture}

By taking a dual of this diagram, we can see what a dual of a generalized system of imprimitivity is.
Taking a dual of the left part of the diagram yields a dual of $L_g$:\\
\begin{tikzpicture}[auto]
\node (a) at (0, 1.5) {$C(X,[0,1])$}; \node (x) at (3, 1.5) {$[0,1]$};
\node (b) at (0, 0) {$C(X,[0,1])$};   \node (y) at (3, 0) {$[0,1]$};
\node (s) at (4.5,0) {$PR(X)$};   \node (t) at (4.5,1.5) {$PR(X)$};
\draw[->] (a) to node {$\int_X(-)d\mu_g$} (x);
\draw[->] (x) to node {$id$} (y);
\draw[->] (a) to node[swap] {$L_g$} (b);
\draw[->] (b) to node[swap] {$\int_X(-)d\mu$} (y);
\draw[->] (s) to node[swap] {$(-)_g$} (t);
\end{tikzpicture}
\\where $\mu_g(\Delta)=\mu(g\Delta)$ for all $g\in G$ and all Borel sets $\Delta\subset X$.
Also, taking a dual of the right part of the diagram yields a dual of $U(g)(-)U(g)^{-1}$:\\
\begin{tikzpicture}[auto]
\node (a) at (0, 1.5) {$Ef(\mathcal{H})$}; \node (x) at (3, 1.5) {$[0,1]$};
\node (b) at (0, 0) {$Ef(\mathcal{H})$};   \node (y) at (3, 0) {$[0,1]$};
\node (s) at (4.5,0) {$DM(\mathcal{H})$};   \node (t) at (4.5,1.5) {$DM(\mathcal{H})$};
\draw[->] (a) to node {$\mathrm{Tr}[U(g)^{-1}\rho U(g)(-)]$} (x);
\draw[->] (x) to node {$id$} (y);
\draw[->] (a) to node[swap] {$U(g)(-)U(g)^{-1}$} (b);
\draw[->] (b) to node[swap] {$\mathrm{Tr}[\rho(-)]$} (y);
\draw[->] (s) to node[swap] {$U(g)^{-1}(-)U(g)$} (t);
\end{tikzpicture}
\\Summing up these, we get the dual of a generalized system of imprimitivity:\\
\begin{tikzpicture}[auto]
\node (a) at (0, 1.5) {$DM(\mathcal{H})$}; \node (x) at (3, 1.5) {$PR(X)$};
\node (b) at (0, 0) {$DM(\mathcal{H})$};   \node (y) at (3, 0) {$PR(X)$};
\draw[->] (a) to node {$M$} (x);
\draw[->] (x) to node {$(-)_g$} (y);
\draw[->] (a) to node[swap] {$U(g)^{-1}(-)U(g)$} (b);
\draw[->] (b) to node[swap] {$M$}(y);
\end{tikzpicture}
\\where $M$ is mesurement of states.
This commuting diagram gives the identity
\begin{equation*}
\mathrm{Tr}[\rho E(g\Delta)]=\mathrm{Tr}[\rho U(g)E(\Delta)U(g)^{-1}]
\end{equation*}
for all $\rho\in DM(\mathcal{H})$, $g\in G$ and all Borel sets $\Delta\subset X$.
This condition states that measurement $E$ is covariant. 

So, we have obtained a covariant measurement as a dual of a generalized system of imprimitivity.
For a detailed explanation of a covariant measurement, see \cite{AH}.

Conversely, we can also induce a generalized system of imprimitivity from a covariant measurement.
From this discussion, we conclude that a generalized system of imprimitivity and a covariant measurement are in a dual relation

Second example is a measurement model.
A measurement model describes a successive procedure that a system and a probe become coupled, they evolves, and a measurement is carried out on the probe.
For a detailed explanation, see \cite{PBB}.

A measurement model is quadruple $(\mathcal{K}, \rho_0, \Lambda, F)$ where $\mathcal{K}$ is a Hilbert space of the probe, $\rho_0$ is an initial state of the probe,
 $\Lambda$ is a completely positive trace preserving (CPTP) map from $B(\mathcal{H}\otimes\mathcal{K}$) to $B(\mathcal{H}\otimes\mathcal{K}$) and $F$ is a POVM on $\mathcal{K}$.
 If the following condition holds for a POVM $E$ on $\mathcal{H}$, it is called a measurement model for $E$:
 \begin{equation*}
  \mathrm{Tr}_\mathcal{H}[\rho E]=\mathrm{Tr}[\Lambda(\rho\otimes\rho_0)(I\otimes F)]
 \end{equation*}
 for all $\rho\in DM(\mathcal{H})$. 
A measurement model for $E$ can be represented in a commuting diagram involving measurement of events:\\
\begin{tikzpicture}[auto]
\node (s) at (0, 3) {$DM(\mathcal{H})$};
\node (a) at (0, 1.5) {$DM(\mathcal{H}\otimes\mathcal{K})$}; \node (x) at (4 ,1.5) {$PR(X)$};
\node (b) at (0, 0) {$DM(\mathcal{H}\otimes\mathcal{K})$};   
\draw[->] (s) to node[swap] {$(-)\otimes\rho_0$} (a);
\draw[->] (a) to node[swap] {$\Lambda$} (b);
\draw[->] (s) to node {$\mathrm{Tr}_\mathcal{H}[(-)E]$} (x);
\draw[->] (b) to node[swap] {$\mathrm{Tr}[(-)I\otimes F]$} (x);
\end{tikzpicture}

By taking a dual of this diagram, we can see what a dual of a measurement model for $E$ is.
Taking a dual of the upper component of the left of diagram yields a dual of $(-)\otimes\rho_0$:\\
\begin{tikzpicture}[auto]
\node (a) at (0, 1.5) {$DM(\mathcal{H})$}; \node (x) at (3, 1.5) {$[0,1]$};
\node (b) at (0, 0) {$DM(\mathcal{H}\otimes\mathcal{K})$};   \node (y) at (3, 0) {$[0,1]$};
\node (s) at (5,0) {$Ef(\mathcal{H}\otimes\mathcal{K})$};   \node (t) at (5,1.5) {$Ef(\mathcal{H})$};
\draw[->] (a) to node {$\mathrm{Tr}_\mathcal{H}[(-)\mathrm{Tr}_\mathcal{K}[\rho_0 A]]$} (x);
\draw[->] (x) to node {$id$} (y);
\draw[->] (a) to node[swap] {$(-)\otimes\rho_0$}(b);
\draw[->] (b) to node[swap] {$\mathrm{Tr}[(-)A]$} (y);
\draw[->] (s) to node[swap]  {$\mathrm{Tr}_\mathcal{K}[\rho_0 (-)]$} (t);
\end{tikzpicture}
\\Also, taking a dual of the lower component of the left of diagram yields a dual of $\Lambda$:\\
\begin{tikzpicture}[auto]
\node (a) at (0, 1.5) {$DM(\mathcal{H}\otimes\mathcal{K})$}; \node (x) at (3, 1.5) {$[0,1]$};
\node (b) at (0, 0) {$DM(\mathcal{H}\otimes\mathcal{K})$};   \node (y) at (3, 0) {$[0,1]$};
\node (s) at (5,0) {$Ef(\mathcal{H}\otimes\mathcal{K})$};   \node (t) at (5,1.5) {$Ef(\mathcal{H}\otimes\mathcal{K})$};
\draw[->] (a) to node {$\mathrm{Tr}[(-)\Lambda_*(A)]$} (x);
\draw[->] (x) to node {$id$} (y);
\draw[->] (a) to node[swap] {$\Lambda $}(b);
\draw[->] (b) to node[swap] {$\mathrm{Tr}[(-)A]$} (y);
\draw[->] (s) to node[swap]  {$\Lambda_*$} (t);
\end{tikzpicture}
\\where $\Lambda_*$ is a dual of the CPTP map $\Lambda$ defined by $\mathrm{Tr}[\Lambda(T)B]=\mathrm{Tr}[T\Lambda_*(B)]$ for all $T\in T(\mathcal{H})$ and $B\in B(\mathcal{H})$.
Summing up these, we get a dual of a measurement model for $E$:\\
\begin{tikzpicture}[auto]
\node (s) at (4, 3) {$Ef(\mathcal{H}\otimes\mathcal{K})$};
\node (a) at (0, 1.5) {$C(X,[0,1])$}; \node (x) at (4, 1.5) {$Ef(\mathcal{H}\otimes\mathcal{K})$};
 \node (y) at (4, 0) {$Ef(\mathcal{H})$};
\draw[->] (a) to node {$\int_X(-)d(I\otimes F)$} (s);
\draw[->] (s) to node {$\Lambda_*$} (x);
\draw[->] (x) to node {$\mathrm{Tr}_\mathcal{K}[\rho_0(-)]$} (y);
\draw[->] (a) to node[swap] {$\int_X(-)dE$}(y);
\end{tikzpicture}
\\This commuting diagram gives the identity
\begin{equation*}
\int_X fdE=\mathrm{Tr}_\mathcal{K}[\rho_0\Lambda_*(\int_X fd(I\otimes F))]
\end{equation*}
for all $f\in C(X,[0,1])$.

Since $C(X,[0,1])$ can be extended to $C(X)$, this is a quantization scheme involving the probe and time evolution of the whole system.
We call it a dual measurement model quantization for $E$.
This is quantization such that the successive procedure of quantization to the whole system,a  time evolution, and a discard of the probe by the initial state is equivalent to quantization to the system. 

Of course, it is easy to check the fact that a dual of a dual  measurement model quantization for $E$ is a measurement model for $E$.

\begin{acknowledgments}
We wish to acknowledge the support of N.P. Landsman for the discussion, the support of the research and encouragement.
We also wish to acknowledge the support of  K. Matsuhisa and J. Lee for the discussion and I. Tsutsui for the supervision.
We acknowledge the financial support of Advanced Leading Photon Science program and Graduate Research Abroad Science Program.
\end{acknowledgments}

\nocite{*}
\bibliography{dbqm}

\providecommand{\noopsort}[1]{}\providecommand{\singleletter}[1]{#1}%
\begin{thebibliography}{11}%
\makeatletter
\providecommand \@ifxundefined [1]{%
 \@ifx{#1\undefined}
}%
\providecommand \@ifnum [1]{%
 \ifnum #1\expandafter \@firstoftwo
 \else \expandafter \@secondoftwo
 \fi
}%
\providecommand \@ifx [1]{%
 \ifx #1\expandafter \@firstoftwo
 \else \expandafter \@secondoftwo
 \fi
}%
\providecommand \natexlab [1]{#1}%
\providecommand \enquote  [1]{``#1''}%
\providecommand \bibnamefont  [1]{#1}%
\providecommand \bibfnamefont [1]{#1}%
\providecommand \citenamefont [1]{#1}%
\providecommand \href@noop [0]{\@secondoftwo}%
\providecommand \href [0]{\begingroup \@sanitize@url \@href}%
\providecommand \@href[1]{\@@startlink{#1}\@@href}%
\providecommand \@@href[1]{\endgroup#1\@@endlink}%
\providecommand \@sanitize@url [0]{\catcode `\\12\catcode `\$12\catcode
  `\&12\catcode `\#12\catcode `\^12\catcode `\_12\catcode `\%12\relax}%
\providecommand \@@startlink[1]{}%
\providecommand \@@endlink[0]{}%
\providecommand \url  [0]{\begingroup\@sanitize@url \@url }%
\providecommand \@url [1]{\endgroup\@href {#1}{\urlprefix }}%
\providecommand \urlprefix  [0]{URL }%
\providecommand \Eprint [0]{\href }%
\providecommand \doibase [0]{http://dx.doi.org/}%
\providecommand \selectlanguage [0]{\@gobble}%
\providecommand \bibinfo  [0]{\@secondoftwo}%
\providecommand \bibfield  [0]{\@secondoftwo}%
\providecommand \translation [1]{[#1]}%
\providecommand \BibitemOpen [0]{}%
\providecommand \bibitemStop [0]{}%
\providecommand \bibitemNoStop [0]{.\EOS\space}%
\providecommand \EOS [0]{\spacefactor3000\relax}%
\providecommand \BibitemShut  [1]{\csname bibitem#1\endcsname}%
\let\auto@bib@innerbib\@empty
\bibitem [{\citenamefont {Foulis}\ and\ \citenamefont {Bennett}(1994)}]{DFMB}%
  \BibitemOpen
  \bibfield  {author} {\bibinfo {author} {\bibfnamefont {D.~J.}\ \bibnamefont
  {Foulis}}\ and\ \bibinfo {author} {\bibfnamefont {M.~K.}\ \bibnamefont
  {Bennett}},\ }\href@noop {} {\bibfield  {journal} {\bibinfo  {journal}
  {Founfation of Physics}\ }\textbf {\bibinfo {volume} {24}},\ \bibinfo {pages}
  {1331} (\bibinfo {year} {1994})}\BibitemShut {NoStop}%
\bibitem [{\citenamefont {Gudder}\ and\ \citenamefont
  {Pulmannov^^c3^^a1}(1998)}]{SGSP}%
  \BibitemOpen
  \bibfield  {author} {\bibinfo {author} {\bibfnamefont {S.}~\bibnamefont
  {Gudder}}\ and\ \bibinfo {author} {\bibfnamefont {S.}~\bibnamefont
  {Pulmannov^^c3^^a1}},\ }\href@noop {} {\bibfield  {journal} {\bibinfo
  {journal} {Commentationes Mathematicae Universitatis Carolinae}\ }\textbf
  {\bibinfo {volume} {39}},\ \bibinfo {pages} {645} (\bibinfo {year}
  {1998})}\BibitemShut {NoStop}%
\bibitem [{\citenamefont {Jacobs}(2010)}]{BJ}%
  \BibitemOpen
  \bibfield  {author} {\bibinfo {author} {\bibfnamefont {B.}~\bibnamefont
  {Jacobs}},\ }in\ \href@noop {} {\emph {\bibinfo {booktitle} {Theoretical
  Computer Science. TCS 2010. IFIP Advances in Information and Communication
  Technology}}},\ Vol.\ \bibinfo {volume} {323},\ \bibinfo {editor} {edited by\
  \bibinfo {editor} {\bibfnamefont {C.~S.}\ \bibnamefont {Calude}}\ and\
  \bibinfo {editor} {\bibfnamefont {V.}~\bibnamefont {Sassone}}}\ (\bibinfo
  {publisher} {Springer},\ \bibinfo {address} {Berlin, Heidelberg},\ \bibinfo
  {year} {2010})\ pp.\ \bibinfo {pages} {1--19}\BibitemShut {NoStop}%
\bibitem [{\citenamefont {Jacobs}\ and\ \citenamefont
  {Mandemaker}(2012{\natexlab{a}})}]{BJJM1}%
  \BibitemOpen
  \bibfield  {author} {\bibinfo {author} {\bibfnamefont {B.}~\bibnamefont
  {Jacobs}}\ and\ \bibinfo {author} {\bibfnamefont {J.}~\bibnamefont
  {Mandemaker}},\ }\href@noop {} {\bibfield  {journal} {\bibinfo  {journal}
  {EPTCS}\ }\textbf {\bibinfo {volume} {95}},\ \bibinfo {pages} {143} (\bibinfo
  {year} {2012}{\natexlab{a}})}\BibitemShut {NoStop}%
\bibitem [{\citenamefont {Lee}\ and\ \citenamefont {Tsutsui}(2018)}]{JLIT}%
  \BibitemOpen
  \bibfield  {author} {\bibinfo {author} {\bibfnamefont {J.}~\bibnamefont
  {Lee}}\ and\ \bibinfo {author} {\bibfnamefont {I.}~\bibnamefont {Tsutsui}},\
  }in\ \href@noop {} {\emph {\bibinfo {booktitle} {Reality and Measurement in
  Algebraic Quantum Theory. NWW 2015. Springer Proceedings in Mathematics and
  Statistics}}},\ Vol.\ \bibinfo {volume} {261},\ \bibinfo {editor} {edited by\
  \bibinfo {editor} {\bibfnamefont {M.}~\bibnamefont {Ozawa}}, \bibinfo
  {editor} {\bibfnamefont {J.}~\bibnamefont {Butterfield}}, \bibinfo {editor}
  {\bibfnamefont {H.}~\bibnamefont {Halvorson}}, \bibinfo {editor}
  {\bibfnamefont {M.}~\bibnamefont {R^^c3^^a9dei}}, \bibinfo {editor}
  {\bibfnamefont {Y.}~\bibnamefont {Kitajima}}, \ and\ \bibinfo {editor}
  {\bibfnamefont {F.}~\bibnamefont {Buscemi}}}\ (\bibinfo  {publisher}
  {Springer},\ \bibinfo {address} {Singapore},\ \bibinfo {year} {2018})\ pp.\
  \bibinfo {pages} {195--228}\BibitemShut {NoStop}%
\bibitem [{\citenamefont {Jacobs}\ and\ \citenamefont
  {Mandemaker}(2012{\natexlab{b}})}]{BJJM2}%
  \BibitemOpen
  \bibfield  {author} {\bibinfo {author} {\bibfnamefont {B.}~\bibnamefont
  {Jacobs}}\ and\ \bibinfo {author} {\bibfnamefont {J.}~\bibnamefont
  {Mandemaker}},\ }\href@noop {} {\bibfield  {journal} {\bibinfo  {journal}
  {Founfation of Physics}\ }\textbf {\bibinfo {volume} {42}},\ \bibinfo {pages}
  {932} (\bibinfo {year} {2012}{\natexlab{b}})}\BibitemShut {NoStop}%
\bibitem [{\citenamefont {Busch}(2003)}]{PB}%
  \BibitemOpen
  \bibfield  {author} {\bibinfo {author} {\bibfnamefont {P.}~\bibnamefont
  {Busch}},\ }\href@noop {} {\bibfield  {journal} {\bibinfo  {journal}
  {Physical Review Letters}\ }\textbf {\bibinfo {volume} {91}},\ \bibinfo
  {pages} {1} (\bibinfo {year} {2003})}\BibitemShut {NoStop}%
\bibitem [{\citenamefont {Mackey}(1968)}]{GM}%
  \BibitemOpen
  \bibfield  {author} {\bibinfo {author} {\bibfnamefont {G.~W.}\ \bibnamefont
  {Mackey}},\ }\href@noop {} {\emph {\bibinfo {title} {Induced representations
  of groups and quantum mechanics}}}\ (\bibinfo  {publisher} {W. A. Benjamin},\
  \bibinfo {year} {1968})\BibitemShut {NoStop}%
\bibitem [{\citenamefont {Landsman}(2006)}]{KL}%
  \BibitemOpen
  \bibfield  {author} {\bibinfo {author} {\bibfnamefont {N.~P.}\ \bibnamefont
  {Landsman}},\ }\enquote {\bibinfo {title} {Handbook of philosophy of science,
  volume 2: Philosophy of physics},}\ \ (\bibinfo  {publisher} {Elsevier},\
  \bibinfo {address} {Oxford},\ \bibinfo {year} {2006})\ pp.\ \bibinfo {pages}
  {417--553},\ \bibinfo {edition} {1st}\ ed.\BibitemShut {Stop}%
\bibitem [{\citenamefont {Holevo}(1982)}]{AH}%
  \BibitemOpen
  \bibfield  {author} {\bibinfo {author} {\bibfnamefont {A.~S.}\ \bibnamefont
  {Holevo}},\ }\href@noop {} {\emph {\bibinfo {title} {Probabilistic and
  Statistical Aspects of Quantum Theory}}}\ (\bibinfo  {publisher} {Edizioni
  della Normale},\ \bibinfo {year} {1982})\BibitemShut {NoStop}%
\bibitem [{\citenamefont {Busch}\ \emph {et~al.}(1995)\citenamefont {Busch},
  \citenamefont {Grabowski},\ and\ \citenamefont {Lahti}}]{PBB}%
  \BibitemOpen
  \bibfield  {author} {\bibinfo {author} {\bibfnamefont {P.}~\bibnamefont
  {Busch}}, \bibinfo {author} {\bibfnamefont {M.}~\bibnamefont {Grabowski}}, \
  and\ \bibinfo {author} {\bibfnamefont {P.~J.}\ \bibnamefont {Lahti}},\
  }\href@noop {} {\emph {\bibinfo {title} {Operational Quantum Physics}}}\
  (\bibinfo  {publisher} {Springer-Verlag Berlin Heidelberg},\ \bibinfo {year}
  {1995})\BibitemShut {NoStop}%
\end{thebibliography}%

\end{document}